\documentclass[12pt]{article}
 \pdfoutput=1
\usepackage{amsmath,amsthm,amsfonts}
\usepackage{fullpage}
\usepackage[skip=10pt plus1pt, indent=40pt]{parskip}
\usepackage{t1enc}
\usepackage{appendix}
\usepackage{enumerate}
\usepackage{epsfig}
\usepackage{makeidx}
\usepackage{amssymb}
\usepackage{amsfonts,amsmath,amsthm,amssymb,nccmath}
\usepackage[authoryear]{natbib}
\usepackage{mathabx,relsize}
\usepackage{graphics,graphicx,psfrag,epsf}
\usepackage{enumerate}
\usepackage{pdflscape}
\usepackage{afterpage}
\usepackage{capt-of}
\usepackage{lscape}
\usepackage{float}
\usepackage{caption}
\usepackage{subcaption}
\usepackage{comment}
\usepackage{lipsum}

\usepackage{ctable} 

\allowdisplaybreaks[3]

\usepackage{xcolor}
\definecolor{dbl}{rgb}{0.0,0.0,1}

\usepackage[pdftex,hypertexnames=false,linktocpage=true,colorlinks]{hyperref}
\hypersetup{colorlinks=true,linkcolor=dbl,anchorcolor=dbl,citecolor=dbl,filecolor=dbl,urlcolor=dbl,bookmarksnumbered=true}\RequirePackage{hypernat}

\usepackage{listings}
\usepackage{multirow}
\usepackage[utf8]{inputenc}
\usepackage[utf8]{luainputenc}
\usepackage[english]{babel}
\usepackage[top=1.5in,bottom=1.2in,left=1in,right=1in]{geometry}

\makeindex

\renewcommand\thesection{\arabic{section}}

\newcommand{\ncom}{\newcommand}
\ncom{\ul}{\underline}
\ncom{\m}{\mathbb{R}}
\ncom{\np}{N_P(\mu,\sGm)}
\ncom{\nno}{\nonumber}
\ncom{\non}{\nonumber}
\ncom{\ds}{\displaystyle}
\ncom{\half}{\frac{1}{2}}
\ncom{\mbx}{\makebox{.25cm}}
\ncom{\hs}{\mbox{\hspace{.25cm}}}
\ncom{\rar}{\rightarrow}
\ncom{\Rar}{\Rightarrow}
\ncom{\noin}{\noindent}
\ncom{\sz}{\scriptsize}
\ncom{\rf}{\ref}
\ncom{\s}{\sqrt{2}}
\ncom{\sgm}{\sigma}
\ncom{\sGm}{\Sigma}
\ncom{\Sgm}{\sigma^2}
\ncom{\psgm}{\sigma^{\prime}}
\ncom{\dt}{\delta}
\ncom{\Dt}{\Delta}
\ncom{\lmd}{\lambda}
\ncom{\Lmd}{\Lambda}
\ncom{\Th}{\theta}
\ncom{\e}{\eta}
\ncom{\Ch}{\chi^2}
\ncom{\pcc}{\stackrel{P}{>}}
\ncom{\lp}{\stackrel{L_{p}}{>}}
\ncom{\dist}{{\rm\,dist}}
\ncom{\sspan}{{\rm\,span}}
\ncom{\re}{{\rm Re\,}}
\ncom{\im}{{\rm Im\,}}
\ncom{\sgn}{{\rm sgn\,}}
\ncom{\hone}{\mbox{\hspace{1em}}}
\ncom{\htwo}{\mbox{\hspace{2em}}}
\ncom{\hthree}{\mbox{\hspace{3em}}}
\ncom{\hfour}{\mbox{\hspace{4em}}}
\ncom{\vone}{\vskip 2ex}
\ncom{\vtwo}{\vskip 4ex}
\ncom{\vonee}{\vskip 1.5ex}
\ncom{\vthree}{\vskip 6ex}
\ncom{\vfour}{\vspace*{8ex}}
\ncom{\norm}{\|\;\;\|}
\ncom{\integ}[4]{\int_{#1}^{#2}\,{#3}\,d{#4}}
\ncom{\vspan}[1]{{{\rm\,span}\{ #1 \}}}
\ncom{\dm}[1]{ {\displaystyle{#1} } }
\ncom{\ri}[1]{{#1} \index{#1}}
\ncom{\vecs}{\mathbf{s}}
\ncom{\rvec}{\mathbf{r}}
\ncom{\rbvec}{\mathbf{\bar{r}}}
\ncom{\rmat}{\mathbf{R}}
\ncom{\rbmat}{\mathbf{\bar{R}}}
\ncom{\mmat}{\mathbf{M}}
\ncom{\mbmat}{\mathbf{\bar{M}}}
\ncom{\nmat}{\mathbf{N}}
\ncom{\zmat}{\mathbf{Z}}

\usepackage[T1]{fontenc}
\usepackage{lmodern}
\usepackage{bm}
\renewcommand{\vec}[1]{\bm{#1}}

\ncom{\vecone}{\vec{{1}}}
\ncom{\identity}{\vec{I}}
\ncom{\zero}{\vec{{0}}}
\ncom{\matone}{\vec{J}}
\ncom{\hatmat}{\vec{H}}
\ncom{\eps}{\epsilon}
\ncom{\cmat}{\vec{C}}
\ncom{\cbmat}{\vec{\bar{C}}}

\usepackage{titlesec}
 \usepackage{etoolbox, chngcntr}
\AtBeginEnvironment{appendices}{%
 \titleformat{\section}{\bfseries\Large}{\appendixname~\thesection:}{0.5em}{}%
 \titleformat{\subsection}{\bfseries\large}{\thesubsection}{0.5em}{}%
\counterwithin{equation}{section}
}

\ncom{\mybib}{\bibliography{references}\bibliographystyle{apalike}}

\newtheorem{remarks}{\bf Remark}[section]

\newtheorem{lemma}{\bf Lemma}[section]

\newtheorem{thm}{\bf Theorem}[section]





\numberwithin{equation}{section}


\makeatletter
\def\namedlabel#1#2{\begingroup
    #2%
    \def\@currentlabel{#2}%
    \phantomsection\label{#1}\endgroup
}
\makeatother

\makeatletter
\def\@makefnmark{%
  \leavevmode
  \raise.9ex\hbox{\fontsize\sf@size\z@\normalfont\tiny\@thefnmark}}
\makeatother

\makeatother

\providecommand{\keywords}[1]
{
  \small	
  \textbf{\textit{Keywords---}} #1
}

\usepackage{natbib}

\setcitestyle{aysep={}} 

\usepackage{placeins}

\usepackage{booktabs}
\raggedbottom

\usepackage{mathrsfs}

\begin{document}
\renewcommand{\subsectionautorefname}{subsection}

\pagenumbering{arabic}

\title{\bf Universally Optimal Multivariate Crossover Designs}
 \author{Shubham Niphadkar\thanks{ Department of Mathematics, Indian Institute of Technology Bombay, Mumbai 400 076, India}\hspace{.2cm} and Siuli Mukhopadhyay\footnotemark[1] \textsuperscript{,} \thanks{Corresponding author; Email: siuli@math.iitb.ac.in}\hspace{.2cm} }
\date{}
  \maketitle

\begin{abstract}
In this article, universally optimal multivariate crossover designs are studied. The multiple response crossover design is motivated by a $3 \times 3$ crossover setup, where the effect of $3$ doses of an oral drug are studied on gene expressions related to mucosal inflammation. Subjects are assigned to three treatment sequences and response measurements on $5$ different gene expressions are taken from each subject in each of the $3$ time periods. To model multiple or $g$ responses, where $g>1$, in a crossover setup, a multivariate fixed effect  model with both direct and carryover treatment effects is considered. It is assumed that there are non zero within response correlations, while between response correlations are taken to be zero. The information matrix corresponding to the direct effects is obtained and some results are studied. The information matrix in the multivariate case is shown to differ from the univariate case, particularly in the completely symmetric property. For the $g>1$ case, with $t$ treatments and $p$ periods, for $p=t \geq 3$, the  design represented by a Type $\rm{I}$ orthogonal array of strength $2$ is proved to be universally optimal over the class of binary designs, for the direct treatment effects.
\end{abstract} \hspace{10pt}

\keywords{Binary designs, Completely symmetric, Correlated response, Orthogonal arrays}

\section{Introduction}\label{sec1}
Often in clinical studies, we come across crossover trials that measure multiple responses from each subject in each period. For example, consider the blood sugar levels recorded at multiple time points in each period \citep{Putt1999AStudies} or multiple gene expression profiles of subjects measured in each period \citep{Leaker2017TheComplement, pareek2023likelihood}. Though these researchers discuss methods for estimating and analyzing the multiple responses measured in each period taking into account the various types of correlation structures which can exist between and within responses, they do not discuss any optimal design results for such multivariate crossover trials. In this article, our aim is to find universally optimal design for such multiple response crossover trials. The design represented by an orthogonal array of Type $\rm{I}$ and strength $2$ is shown to be universally optimal for the $p=t\geq 3$ case, where $p$ and $t$ denote the number of periods and treatments, respectively.\par

Crossover designs were initially developed to be used in agricultural sciences \citep{Cochran1939Long-termExperiments}. Later, these designs were also used in various other fields, such as pharmaceutical and clinical trials, and bioequivalence and biological studies \citep{Singh2016BayesianModels}. Optimal crossover designs for univariate responses in a fixed effect model setting have been studied by numerous researchers, namely \cite{Hedayat1975RepeatedI, Hedayat1978RepeatedII}, \cite{Cheng1980BalancedDesigns}, \cite{kunert1983optimal1, Kunert1984OptimalityDesigns}, \cite{Stufken1991SomeDesigns}, \cite{Kushner1997OptimalityDesigns, Kushner1998OptimalObservations}, \cite{Kunert2000OptimalityErrors}, \cite{Hedayat2003UniversalDesigns, Hedayat2004UniversalDesigns} and \cite{Singh2021EfficientSettings}. Whereas, crossover designs with random subject effects were explored by \cite{Laska1983OptimalEffects}, \cite{Laska1985AModels}, \cite{Carriere1993OptimalTreatments}, \cite{Carriere2000CrossoverTrials} and \cite{Hedayat2006OptimalRandom}, to name a few. Recently, there has been some interest in finding optimal crossover designs for generalized linear models, see \cite{Singh2016BayesianModels}, \cite{Jankar2020OptimalModels}, \cite{Mukhopadhyay2021LocallyDesigns} and \cite{Singh2021MinmaxTrials}. For a detailed review of crossover designs, we would like to refer to the paper by \cite{Bose2013DevelopmentsDesigns} and books by \cite{Senn2002Cross-overResearch}, \cite{Bose2009OptimalDesigns} and \cite{Kenward2014CrossoverTrials}. However, all these works concentrate only on single responses measured in each period. Till date, there has been no work on the development of universally optimal crossover designs for the multiple response case. 

In this article, to model data from a multiple response crossover trial, we propose a multivariate fixed effect model with both direct and carryover effects of treatments. The underlying assumption made is that no correlation exists between distinct responses. However, we allow for within response correlation, accounting for homoscedastic error variances. Information matrix for the direct effects is investigated and is shown to differ from the univariate case by its lack of complete symmetricity property thus violating the sufficient condition by \cite{Kiefer1975ConstructionIi}. To find optimal designs we instead resort to the more general technique proposed by \cite{yen1986conditions} for determining universally optimal multiple response crossover design.

\section{Motivating Example}\label{example data}
For motivating multiple response crossover trials, we use a gene expression dataset more recently considered by \cite{Leaker2017TheComplement}. In this study, a randomized, double-blind crossover experiment involving $3$ periods and $3$ treatments was considered. This crossover trial was placebo controlled and the goal was to study the effects of two single doses of oral drug, prednisone ($10$ mg and $25$ mg), with placebo on biomarkers of mucosal inflammation and transcriptomics after a nasal allergen challenge. The subjects enrolled in the study were assigned to one of the $3$ treatment sequences; $ABC$, $CAB$ and $BCA$. Treatment $A$ was the $10$ mg dose of the drug, while $B$ and $C$ were the placebo and $25$ mg dose of the drug, respectively. For our purpose, we considered $5$ gene profiles recorded in the nasal allergen challenge as the multiple outcomes.

Various tests were performed to check for correlation between and within the $5$ responses. Tables~\ref{table3}, \ref{table4a} and \ref{table5} show the test results. Table~\ref{table3} explores correlations between different responses in the same period, while Table~\ref{table4a} do this for different periods. Tables~\ref{table3} and \ref{table4a} show that there is no significant correlation (at a level of significance $0.01$) between distinct responses measured in the same and different time periods. In Table~\ref{table5}, we investigate if there is a presence of within response correlation, i.e., if the observations measured on the same genes are correlated. From the results given in Table~\ref{table5}, we note that some sample correlation coefficients have significantly low p-values (lower than $0.01$) thus implying within gene correlations.

\begin{table}[h]
\caption{Results for testing significance of within-period correlation coefficient between genes}\label{table3}
\begin{tabular*}{\textwidth}{@{\extracolsep\fill}lcccccc}
\toprule%
& \multicolumn{2}{@{}c@{}}{Period 1} & \multicolumn{2}{@{}c@{}}{Period 2} & \multicolumn{2}{@{}c@{}}{Period 3}\\\cmidrule{2-3}\cmidrule{4-5}\cmidrule{6-7}%
Pair of Variables & $r$ & p-value & $r$ & p-value & $r$ & p-value \\
\midrule
$\left( Gene_1, Gene_2 \right)$ &   $0.5249$ &  $0.0368$ & $0.2405 $ &  $0.3697$ & $0.5819$ &  $0.0181$\\
 $\left( Gene_1, Gene_3 \right)$& $-0.5014$ &  $0.0479$ & $-0.1112 $ &  $0.6819$ & $-0.1764$ &  $0.5135$\\
 $\left( Gene_1, Gene_4 \right)$& $-0.5898 $ &  $0.0162$ & $-0.5665 $ &  $0.0221$ & $0.0747 $ &  $0.7834$\\
 $\left( Gene_1, Gene_{5} \right)$& $0.0283$ &  $0.9172$ & $0.0184 $ &  $0.9461$ & $-0.2763 $ &  $0.3003$\\
 $\left( Gene_2, Gene_3 \right)$& $-0.5798$ &  $0.0186$ & $-0.5100$ &  $0.0436$ & $-0.3393$ &  $0.1985$\\
 $\left( Gene_2, Gene_4 \right)$& $-0.1938$ &  $0.472$ & $-0.0055$ &  $0.9838$ & $0.2499$ &  $0.3506$\\
 $\left( Gene_2, Gene_{5} \right)$& $-0.2325$ &  $0.3862$ & $0.0784 $ &  $0.773$ & $-0.2094$ &  $0.4364$\\
 $\left( Gene_3, Gene_4 \right)$& $0.1361$ &  $0.6152$ & $0.5072$ &  $0.0449$ & $0.3269$ &  $0.2166$\\
 $\left( Gene_3, Gene_{5} \right)$& $-0.0201$ &  $0.9391$ & $-0.1387$ &  $0.6085$ & $-0.2275$ &  $0.3969$\\
 $\left( Gene_4, Gene_{5} \right)$& $-0.2338$ &  $0.3834$ & $-0.1856$ &  $0.4913$ & $0.1346$ &  $0.6191$\\
\bottomrule
\end{tabular*}
\end{table}

\begin{sidewaystable}[p]
\caption{Results for testing significance of between-period correlation coefficient between genes}\label{table4a}
\begin{tabular*}{\textwidth}{@{\extracolsep\fill}lcccccccccccc}
\cmidrule{1-13}%
& \multicolumn{2}{@{}c@{}}{{\small{(Period 1, Period 2)}}} & \multicolumn{2}{@{}c@{}}{{\small{(Period 1, Period 3)}}} & \multicolumn{2}{@{}c@{}}{{\small{(Period 2, Period 1)}}} & \multicolumn{2}{@{}c@{}}{{\small{(Period 2, Period 3)}}} & \multicolumn{2}{@{}c@{}}{{\small{(Period 3, Period 1)}}} & \multicolumn{2}{@{}c@{}}{{\small{(Period 3, Period 2)}}}\\\cmidrule{2-3}\cmidrule{4-5}\cmidrule{6-7}\cmidrule{8-9}\cmidrule{10-11}\cmidrule{12-13}%
{\small \small{Pair of Variables}} & {\small \small{$r$}} & {\small \small{p-value}} & {\small \small{$r$}} & {\small \small{p-value}} & {\small \small{$r$}} & {\small \small{p-value}} & {\small \small{$r$}} & {\small \small{p-value}} & {\small \small{$r$}} & {\small \small{p-value}} & {\small \small{$r$}} & {\small \small{p-value}}\\
\cmidrule{1-13}%
{\small{$\left( Gene_1, Gene_2 \right)$}} &   {\small{$0.6019$}} &  {\small{$0.0136$}} & {\small{$0.4356 $}} &  {\small{$0.0917$}} & {\small{$0.3616$}}  & {\small{$0.1687$}} & {\small{$0.3186$}} &  {\small{$0.2292$}} & {\small{$0.2368$}} &  {\small{$0.3772$}} & {\small{$0.2291$}} &  {\small{$0.3935$}}\\
 {\small{$\left( Gene_1, Gene_3 \right)$}} & {\small{$-0.3796$}} &  {\small{$0.147$}} & {\small{$-0.3110 $}} &  {\small{$0.241$}} & {\small{$-0.3812$}} &  {\small{$0.1451$}} & {\small{$0.0039$}} &  {\small{$0.9887$}} & {\small{$-0.4121 $}} &  {\small{$0.1127$}} & {\small{$-0.0867$}} &  {\small{$0.7497$}}\\
 {\small{$\left( Gene_1, Gene_4 \right)$}} & {\small{$-0.3925 $}} &  {\small{$0.1327$}} & {\small{$0.1514$}} &  {\small{$0.5758$}} & {\small{$-0.3572$}} &  {\small{$0.1744$}} & {\small{$0.5284$}} &  {\small{$0.0354$}} & {\small{$-0.2977$}} &  {\small{$0.2628$}} & {\small{$-0.3366 $}} & {\small{$0.2024$}}\\
{\small{$\left( Gene_1, Gene_{5} \right)$}} & {\small{$0.1760 $}} & {\small{$0.5145$}} & {\small{$0.1070 $}} &  {\small{$0.6932$}} & {\small{$-0.2620 $}} &  {\small{$0.327$}} & {\small{$0.2173 $}} &  {\small{$0.4189$}} & {\small{$-0.1770 $}} &  {\small{$0.512$}} & {\small{$0.2731 $}} &  {\small{$0.3061$}}\\
{\small{$\left( Gene_2, Gene_3 \right)$}} & {\small{$-0.2346$}} &  {\small{$0.3819$}} & {\small{$-0.1038$}} &  {\small{$0.702$}} & {\small{$-0.4676$}} &  {\small{$0.0678$}} & {\small{$-0.2482 $}} & {\small{$0.3539$}} & {\small{$-0.4988 $}} &  {\small{$0.0492$}} & {\small{$-0.2914$}} &  {\small{$0.2736$}}\\
{\small{$\left( Gene_2, Gene_4 \right)$}} & {\small{$-0.2799 $}} &  {\small{$0.2938$}} & {\small{$0.2380 $}} &  {\small{$0.3748$}} & {\small{$-0.6061$}} &  {\small{$0.0128$}} & {\small{$0.2259$}} &  {\small{$0.4002$}} & {\small{$-0.1550$}} &  {\small{$0.5664$}} & {\small{$-0.3463$}} &  {\small{$0.1889$}}\\
{\small{$\left( Gene_2, Gene_{5} \right)$}} & {\small{$0.0004 $}} &  {\small{$0.999$}} & {\small{$0.1501$}} &  {\small{$0.5789$}} & {\small{$0.2867$}} &  {\small{$0.2817$}} & {\small{$0.1587$}} &  {\small{$0.5573$}} & {\small{$-0.2851$}} &  {\small{$0.2844$}} & {\small{$0.0508$}} &  {\small{$0.8517$}}\\
 {\small{$\left( Gene_3, Gene_4 \right)$}} & {\small{$-0.0881 $}} &  {\small{$0.7456$}} & {\small{$0.0819 $}} &  {\small{$0.763$}} & {\small{$0.3954$}} &  {\small{$0.1296$}} & {\small{$-0.5053$}} &  {\small{$0.0459$}} & {\small{$-0.0110 $}} &  {\small{$0.9678$}} & {\small{$0.0025 $}} &  {\small{$0.9928$}}\\
{\small{$\left( Gene_3, Gene_{5} \right)$}} & {\small{$0.1892$}} &  {\small{$0.4829$}} & {\small{$-0.0959 $}} &  {\small{$0.7239$}} & {\small{$-0.2970 $}} &  {\small{$0.264$}} & {\small{$-0.0011$}} &  {\small{$0.9967$}} & {\small{$-0.0234 $}} &  {\small{$0.9315$}} & {\small{$-0.1645$}} &  {\small{$0.5426$}}\\
{\small{$\left( Gene_4, Gene_{5} \right)$}} & {\small{$-0.2483$}} &  {\small{$0.3539$}} & {\small{$-0.1959$}} &  {\small{$0.4672$}} & {\small{$0.1379 $}} &  {\small{$0.6105$}} & {\small{$0.2486$}} &  {\small{$0.3532$}} & {\small{$0.0785 $}} &  {\small{$0.7726$}} & {\small{$-0.1141 $}} &  {\small{$0.674$}}\\
\cmidrule{1-13}%
\end{tabular*}
\end{sidewaystable}

\begin{table}[h]
\caption{Results for testing significance of between-period correlation coefficient within genes}\label{table5}
\begin{tabular*}{\textwidth}{@{\extracolsep\fill}lcccccc}
\toprule%
& \multicolumn{2}{@{}c@{}}{(Period 1, Period 2)} & \multicolumn{2}{@{}c@{}}{(Period 1, Period 3)} & \multicolumn{2}{@{}c@{}}{(Period 2, Period 3)}\\\cmidrule{2-3}\cmidrule{4-5}\cmidrule{6-7}%
Variable & $r$ & p-value & $r$ & p-value & $r$ & p-value \\
\midrule
 $ Gene_1$ &  $0.6145$  &  $0.0113$ & $0.6069$ &  $0.0127$ & $0.4889$ & $0.0547$\\
 $ Gene_2$ & $0.3994$ &   $0.1253$ & $0.5722$ &  $0.0206$ & $0.5323$ &  $0.0338$\\
 $ Gene_3$ &  $-0.1713$ &   $0.5258$ & $0.2573$ & $0.336$ & $0.0626$ & $0.8178$\\
 $ Gene_4$ &  $0.0489$ &  $0.8574$ & $-0.2812$ & $0.2914$ & $-0.6729$ &  $0.0043$\\
 $ Gene_{5}$ &  $-0.1912$ & $0.478$ & $0.0054$ &  $0.9841$ & $-0.3650$ &  $0.1645$\\
\bottomrule
\end{tabular*}
\end{table}

We use the above motivating example of the genetic study and the corresponding correlation results to frame the multivariate crossover statistical model in the next section.

\section{Proposed Statistical Model}\label{proposed-model}

We consider crossover designs with $t$ treatments, $n$ subjects and $p$ periods, where $t,p \geq 2$. Let $\varOmega_{t,n,p}$ be the class of all such designs. We consider that for any design in $\varOmega_{t,n,p}$ and $g \geq 1$, there are $g$ response variables on which observations are recorded corresponding to every subject in each period. For $1 \leq k \leq g$, corresponding to the $k^{th}$ response, let $Y_{dijk}$ represent the random variable corresponding to the observation from the $i^{th}$ period and the $j^{th}$ subject, where $1 \leq i \leq p$ and $1 \leq j \leq n$. We consider that $Y_{dijk}$ satisfies the following model:
\begin{align}
Y_{dijk} = \mu_k + \alpha_{i,k} + \beta_{j,k} + \tau_{d \left( i,j \right),k} + \rho_{d \left( i-1,j \right),k}  + \eps_{ijk},
\label{model1}
\end{align}
where corresponding to the $k^{th}$ response variable, $\eps_{ijk}$ is the error term with mean $0$ and variance $\sigma^2$, $\mu_k$ is the intercept, $\alpha_{i,k}$ is the $i^{th}$ period effect, $\beta_{j,k}$ is the $j^{th}$ subject effect and for $1 \leq s \leq t$, $\tau_{s,k}$ is the direct effect and $\rho_{s,k}$ is the first order carryover effect due to the $s^{th}$ treatment. Here $d(i,j)$ represents the treatment allocated to the $i^{th}$ period and the $j^{th}$ subject. We assume that the model \eqref{model1} is a fixed effect model. In the above model, all the effects are taken to vary along with the response variable and no carryover effect is assumed in  the first period.

Expressing the above model equation in matrix notations, we obtain
\begin{equation}
\vec{Y}_{dk} = \vec{X}_{dk} \vec{\theta}_k + \vec{\eps}_k,
\label{model11}
\end{equation}
where $\vec{Y}_{dk} = \left( Y_{d11k}, \cdots, Y_{dp1k}, Y_{d12k}, \cdots, Y_{dp2k}, \cdots, Y_{d1nk}, \cdots, Y_{dpnk} \right)^{'}$, $\vec{X}_{dk}=
\begin{bmatrix}
\vecone_{np} & \vec{P} & \vec{U} & \vec{T}_d & \vec{F}_d
\end{bmatrix}$, $\vec{\eps}_k = \left( \eps_{11k}, \cdots, \eps_{p1k}, \eps_{12k}, \cdots, \eps_{p2k}, \cdots, \eps_{1nk}, \cdots, \eps_{pnk} \right)^{'}$, and $\vec{\theta}_k = \left( \mu_k, \vec{\alpha}_k^{'}, \vec{\beta}_k^{'}, \vec{\tau}_k^{'}, \vec{\rho}_k^{'} \right)^{'}$ is the parameter vector of length $(1+p+n+2t)$, with components: $\mu_k$, $\vec{\alpha}_k = \left( \alpha_{1,k}, \cdots, \alpha_{p,k} \right)^{'}$, $\vec{\beta}_k = \left( \beta_{1,k}, \cdots, \beta_{n,k} \right)^{'}$, $\vec{\tau}_k = \left( \tau_{1,k}, \cdots, \tau_{t,k} \right)^{'}$ and $\vec{\rho}_k = \left( \rho_{1,k}, \cdots, \rho_{t,k} \right)^{'}$. Here, the period and subject effects are accounted by, $\vec{P}=\vecone_n \otimes \identity_p$, $\vec{U}=\identity_n \otimes \vecone_p$, respectively, while $\vec{T}_d$ is the design matrix corresponding to treatment effects, and $\vec{F}_d$ is the design matrix for carryover effects. Note that $\vec{F}_d = \left(\identity_n \otimes \vec{\psi}\right)  \vec{T}_d$, where $\vec{\psi} =  
\begin{bmatrix}
\zero^{'}_{p-1 \times 1} & 0\\
\identity_{p-1} & \zero_{p-1 \times 1}
\end{bmatrix}$. We partition $\vec{X}_{dk}$ as 
\begin{equation}
\vec{X}_{dk} =
\begin{bmatrix}
\vecone_{np} & \vec{X}_1 & \vec{X}_2
\end{bmatrix},
\label{Xd}
\end{equation}
where $\vec{X}_1 = 
\begin{bmatrix} 
\vec{P} & \vec{U}
\end{bmatrix}$ and $\vec{X}_2 = 
\begin{bmatrix}  
\vec{T}_d & \vec{F}_d 
\end{bmatrix}$.
Combining the model equations for all $g$ responses, we get
\begin{align}
\begin{bmatrix}
 \vec{Y}^{'}_{d1} &
\cdots &
\vec{Y}^{'}_{dg}
\end{bmatrix}^{'}
&= \left( \identity_{g} \otimes \vec{X}_{d1} \right) 
\begin{bmatrix}
\vec{\theta}^{'}_{1} &
\cdots &
\vec{\theta}^{'}_g
\end{bmatrix}^{'}
+
\begin{bmatrix}
 \vec{\eps}^{'}_{1} &
\cdots &
\vec{\eps}^{'}_g
\end{bmatrix}^{'},
\label{unio}
\end{align}
where $\mathbb{E} \left( \vec{\eps}_k \right) = \zero_{np \times 1}$. It is assumed that (i) the observations from different subjects  are uncorrelated, and (ii) no correlations exists between distinct response variables implying, $\mathbb{\text{Cov}} \left( \vec{\eps}_k, \vec{\eps}_{k'} \right) = \zero_{np \times np}$, for $1 \leq k \neq k' \leq g$. Under these assumptions, the dispersion matrix of $\vec{\eps}_k$ is taken to be
\begin{align}
\mathbb{D} \left( \vec{\eps}_k \right) = \sigma^2 \vec{\varSigma} = \sigma^2 \left( \identity_{n} \otimes \vec{V} \right),
\end{align}
where $\sigma^2 >0$ is an unknown constant, while $\vec{V}$ is assumed to be a known positive definite and symmetric $p \times p$ matrix. 

\section{Information Matrix of the Direct Treatment Effects}
\label{information matrices}
In this section, we determine the information matrix under model \eqref{unio} for the direct effects, for $g \geq 1$. We use some results for the $g=1$ case from \cite{Bose2009OptimalDesigns}.

For any design $d \in \varOmega_{t,n,p}$, let $\vec{C}_{d}$ represent the information matrix for the direct effects. Theorem~{\upshape\ref{thm4b}} is the main result for the information matrix in the $g \geq 1$ case. 
\begin{thm}
The information matrix for the direct effects can be expressed as
\begin{align}
\vec{C}_{d}&= \identity_g \otimes \left[ \vec{C}_{{d}11} - \vec{C}_{{d}12}\vec{C}^{-}_{{d}22} \vec{C}_{{d}21} \right], \label{p9}
\end{align}
where $\vec{C}_{{d}11} = \vec{T}^{'}_d \vec{A}^{*} \vec{T}_d$, $\vec{C}_{{d}12} = \vec{C}_{{d}21}^{'} = \vec{T}^{'}_d \vec{A}^{*} \vec{F}_d$, $\vec{C}_{{d}22} = \vec{F}^{'}_d \vec{A}^{*} \vec{F}_d$, $\vec{A}^{*} = \hatmat_n \otimes \vec{V}^{*}$ and $\vec{V}^{*} = \vec{V}^{-1} - \left( \vecone_p^{'} \vec{V}^{-1} \vecone_p \right)^{-1} \vec{V}^{-1} \vecone_{p} \vecone_{p}^{'} \vec{V}^{-1}$. Here $\hatmat_n = \identity_n - \frac{1}{n} \vecone_n \vecone_n^{'}$ and $\vec{C}^{-}_{d22}$ is a generalized inverse of of $\vec{C}_{d22}$.
\label{thm4b}
\end{thm}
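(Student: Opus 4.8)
The plan is to start from the combined model \eqref{unio}, read off the generalized least squares (GLS) information matrix for the whole parameter vector, and then extract the direct-effects block by successively eliminating the nuisance parameters. The block-diagonal structure across the $g$ responses will let me collapse everything to the single-response computation, which can then be imported from \cite{Bose2009OptimalDesigns}. First I would record the dispersion of the stacked error vector: since between-response correlations vanish and $\mathbb{D}\left(\vec{\eps}_k\right) = \sigma^2 \vec{\varSigma}$ with $\vec{\varSigma} = \identity_n \otimes \vec{V}$, the full error in \eqref{unio} has dispersion $\sigma^2\left( \identity_g \otimes \identity_n \otimes \vec{V}\right)$. Hence, up to the factor $\sigma^{-2}$, the GLS information matrix for $\left(\vec{\theta}_1', \ldots, \vec{\theta}_g'\right)'$ is
\begin{equation*}
\left(\identity_g \otimes \vec{X}_{d1}\right)' \left(\identity_g \otimes \identity_n \otimes \vec{V}\right)^{-1} \left(\identity_g \otimes \vec{X}_{d1}\right) = \identity_g \otimes \left[ \vec{X}_{d1}' \left(\identity_n \otimes \vec{V}^{-1}\right) \vec{X}_{d1} \right],
\end{equation*}
by the mixed-product rule for Kronecker products. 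This $\identity_g \otimes$ factor is precisely the source of the leading $\identity_g$ in \eqref{p9}.

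Next, because this matrix is block diagonal across responses, the information matrix for the direct effects $\left(\vec{\tau}_1', \ldots, \vec{\tau}_g'\right)'$ is obtained by eliminating, identically within each response block, the intercept, period, subject and carryover parameters. I would make this precise by permuting the parameter vector so that the $g$ direct-effect subvectors are grouped together, observing that this permutation commutes with the $\identity_g \otimes$ structure and that $\left(\identity_g \otimes \vec{M}_{22}\right)^- = \identity_g \otimes \vec{M}_{22}^-$ for any sub-block $\vec{M}_{22}$. The Schur-complement elimination then acts block-wise and reproduces $\identity_g \otimes \vec{C}_d^{(1)}$, where $\vec{C}_d^{(1)}$ is the single-response ($g=1$) information matrix for the direct effects. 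Thus it suffices to treat $g=1$.

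For the single response I would apply the standard two-stage reduction. Partition $\vec{X}_{d1} = \begin{bmatrix} \vec{Z}_1 & \vec{T}_d & \vec{F}_d \end{bmatrix}$ with $\vec{Z}_1 = \begin{bmatrix} \vecone_{np} & \vec{P} & \vec{U} \end{bmatrix}$ collecting the intercept, period and subject columns. Eliminating $\vec{Z}_1$ replaces $\identity_n \otimes \vec{V}^{-1}$ by the residual matrix $\vec{A}^* = \left(\identity_n \otimes \vec{V}^{-1}\right) - \left(\identity_n \otimes \vec{V}^{-1}\right)\vec{Z}_1 \left(\vec{Z}_1' \left(\identity_n \otimes \vec{V}^{-1}\right)\vec{Z}_1\right)^- \vec{Z}_1' \left(\identity_n \otimes \vec{V}^{-1}\right)$, which I would show equals $\hatmat_n \otimes \vec{V}^*$ by using $\vec{P} = \vecone_n \otimes \identity_p$, $\vec{U} = \identity_n \otimes \vecone_p$ and the $\vec{V}^{-1}$-orthogonal splitting of $\m^p$ into $\sspan\{\vecone_p\}$ and its complement. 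The joint information matrix for $\begin{bmatrix} \vec{T}_d & \vec{F}_d \end{bmatrix}$ is then $\begin{bmatrix} \vec{T}_d & \vec{F}_d \end{bmatrix}' \vec{A}^* \begin{bmatrix} \vec{T}_d & \vec{F}_d \end{bmatrix}$, whose blocks are $\vec{C}_{d11}, \vec{C}_{d12}, \vec{C}_{d22}$; a second Schur complement with respect to $\vec{C}_{d22}$ eliminates the carryover effects and yields $\vec{C}_{d11} - \vec{C}_{d12}\vec{C}_{d22}^{-}\vec{C}_{d21}$. This single-response form is the one available from \cite{Bose2009OptimalDesigns}, so I would cite it rather than rederive it in full.

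The step I expect to be the main obstacle is the explicit identification $\vec{A}^* = \hatmat_n \otimes \vec{V}^*$. The columns of $\vec{Z}_1$ are linearly dependent---$\vecone_{np}$ lies in the column span of both $\vec{P}$ and $\vec{U}$---so $\vec{Z}_1'\left(\identity_n \otimes \vec{V}^{-1}\right)\vec{Z}_1$ is singular and a genuine generalized inverse must be used; I would first check that the resulting projection is invariant to that choice. Carrying out the block inversion so that the period part contributes the matrix $\vec{V}^*$ (projecting $\vecone_p$ out in the $\vec{V}^{-1}$ metric, so that $\vec{V}^*\vecone_p = \zero$) while the subject averaging contributes the centering matrix $\hatmat_n = \identity_n - \tfrac{1}{n}\vecone_n\vecone_n'$ is the delicate computation. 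Once $\vec{A}^*$ is pinned down, the remaining Schur complement, the verification that the generalized inverse $\vec{C}_{d22}^{-}$ gives a range-compatible reduction, and the reassembly of the $\identity_g$ factor are routine, and \eqref{p9} follows.
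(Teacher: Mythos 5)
Your proposal is correct and takes essentially the same route as the paper: both arguments exploit the block-diagonal (Kronecker) structure across the $g$ responses to factor out $\identity_g$, reduce to the single-response case, and then import the expression $\vec{C}_{{d}11} - \vec{C}_{{d}12}\vec{C}^{-}_{{d}22} \vec{C}_{{d}21}$ together with the identification $\vec{A}^{*} = \hatmat_n \otimes \vec{V}^{*}$ from \cite{Bose2009OptimalDesigns}. The only difference is presentational: the paper first whitens the errors with $\identity_g \otimes \vec{\varSigma}^{-1/2}$ and invokes the projection formula of \cite{kunert1983optimal1}, whereas you carry out the equivalent nuisance-parameter elimination directly via Schur complements of the GLS information matrix.
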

\begin{proof}
\renewcommand{\subsectionautorefname}{Supplementary Material}
For details of the proof, refer to Appendix \ref{secB}.
\end{proof}

\begin{remarks}
The information matrix $\vec{C}_d$ is symmetric, non-negative definite (n.n.d.) matrix having zero row sums and column sums, and is invariant with respect to the choice of generalized inverses involved.
\label{re2a}
\end{remarks}
\begin{proof}
\renewcommand{\subsectionautorefname}{Supplementary Material}
The proof is given in Appendix \ref{secB}.
\end{proof}

\section{Universal Optimality}
For the single response case, universal optimality criterion have been discussed by various authors including \cite{Kiefer1975ConstructionIi} and \cite{Bose2009OptimalDesigns} (see Chapter 1, pp. 18--22). Here, we extend the definition of universally optimal designs by \cite{Kiefer1975ConstructionIi} to the $g>1$ setup as follows:\\
Let for $g > 1$, $\mathcal{B}_{gt}$ be a class of $gt \times gt$ symmetric, non-negative definite (n.n.d.) matrices having zero row sums. Suppose $\varPhi$ is the class of functions $\phi: \mathcal{B}_{gt} \rightarrow \left( -\infty , \infty \right]$, such that
\begin{enumerate}
\item \label{aa} $\phi$ is matrix convex;
\item \label{bb} For any matrix $\vec{A} \in \mathcal{B}_{gt}$, $\phi \left( x \vec{A}  \right)$ is non-increasing in the scalar $x\geq 0$;
\item \label{cc}$\phi$ is invariant under each simultaneous permutation of rows and columns of $\vec{A} \in \mathcal{B}_{gt}$.
\end{enumerate}
Then a design $d^* \in \mathcal{D}$, where $\mathcal{D}$ is a subclass of designs, is a universally optimal design for the parameters of interest over $\mathcal{D}$, if $d^*$ minimizes $\phi \left( \vec{A}_d  \right)$ over $\mathcal{D}$, where $\vec{A}_d$ is the information matrix for the parameters of interest.

Note in the multivariate setup, we have $gt$ direct treatment effect parameters and the information matrix $\vec{A}_d$ is of order $gt \times gt$. In the single response case, the sufficient conditions of \cite{Kiefer1975ConstructionIi} (a) complete symmetricity of the information matrix corresponding to $d^*$, and (b) $d^*$ maximizing the trace of the information matrix over $\mathcal{D}$, where $\mathcal{D}$ is a subclass of designs, leads to the universal optimality of $d^*$ over $\mathcal{D}$. However in the $g>1$ setup, due to the lack of complete symmetricity of the information matrix for the direct effects (see Lemma~{\upshape\ref{ree2c}} in the Appendix \ref{secA}), we are unable to use a similar sufficient condition. For our multivariate setup, instead we resort to the use of a sufficient condition on the lines of \cite{yen1986conditions} as stated in the next Lemma.
\begin{lemma}
Let a design $d^* \in \mathcal{D}$ be such that for $d \in \mathcal{D}$, the corresponding information matrix $\vec{A}_d \in \mathcal{B}_{gt}$; for any $d \in \mathcal{D}$ and $\vec{A}_{d} \neq \zero_{gt \times gt}$, there exists scalars $b_{d1}, \cdots, b_{d(gt)} \geq 0$ satisfying $\vec{A}_{d^*} = \sum_{\kappa=1}^{(gt)!} b_{d\kappa} \vec{P}_{\kappa} \vec{A}_{d} \vec{P}^{'}_{\kappa}$, where $g>1$ and $\vec{P}_1$, $\cdots$, $\vec{P}_{(gt)!}$ are all possible distinct $gt \times gt$ permutation matrices. Then the design $d^*$ is universally optimal for the parameters of interest over $\mathcal{D}$, if $d^*$ maximizes $\text{\rmfamily\upshape tr} \left( \vec{A}_{d} \right)$ over $d \in \mathcal{D}$.
\label{lemma1}
\end{lemma}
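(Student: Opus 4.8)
The plan is to verify the definition of universal optimality head-on: I would show that for every criterion $\phi \in \varPhi$ and every competitor $d \in \mathcal{D}$ one has $\phi(\vec{A}_{d^*}) \leq \phi(\vec{A}_d)$, feeding in the three structural properties of $\phi$ (convexity~\ref{aa}, scalar-monotonicity~\ref{bb}, permutation-invariance~\ref{cc}) together with the two hypotheses (the averaging identity and the trace-maximization of $d^*$). The degenerate case $\vec{A}_d = \zero_{gt \times gt}$ is immediate from property~\ref{bb}, since $\phi(\zero) = \phi(0 \cdot \vec{A}_{d^*}) \geq \phi(1 \cdot \vec{A}_{d^*}) = \phi(\vec{A}_{d^*})$; so I would focus on $\vec{A}_d \neq \zero_{gt \times gt}$, where the hypothesis supplies $b_{d1}, \ldots, b_{d(gt)!} \geq 0$ with $\vec{A}_{d^*} = \sum_{\kappa} b_{d\kappa} \vec{P}_\kappa \vec{A}_d \vec{P}'_\kappa$.

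First I would pin down the total weight $b := \sum_{\kappa=1}^{(gt)!} b_{d\kappa}$ by taking traces of the averaging identity. Since $\mathrm{tr}(\vec{P}_\kappa \vec{A}_d \vec{P}'_\kappa) = \mathrm{tr}(\vec{A}_d)$ for every permutation matrix, this gives $\mathrm{tr}(\vec{A}_{d^*}) = b \cdot \mathrm{tr}(\vec{A}_d)$. Because $d^*$ maximizes the trace and $\vec{A}_d$ is a nonzero n.n.d. matrix (so $\mathrm{tr}(\vec{A}_d) > 0$), this forces $b \geq 1$. I would also record that each $\vec{P}_\kappa \vec{A}_d \vec{P}'_\kappa$, and hence $b^{-1}\vec{A}_{d^*}$, again lies in $\mathcal{B}_{gt}$: conjugation by a permutation matrix preserves symmetry, nonnegative-definiteness, and the zero-row-sum property (using $\vec{P}'_\kappa \vecone = \vecone$), and all three survive nonnegative scaling and convex combination. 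This membership is what later licenses applying $\phi$, whose properties are only postulated on the cone $\mathcal{B}_{gt}$.

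Next, dividing the averaging identity by $b$ rewrites $b^{-1}\vec{A}_{d^*} = \sum_\kappa (b_{d\kappa}/b)\,\vec{P}_\kappa \vec{A}_d \vec{P}'_\kappa$ as a genuine convex combination. Matrix convexity~\ref{aa} then yields $\phi(b^{-1}\vec{A}_{d^*}) \leq \sum_\kappa (b_{d\kappa}/b)\,\phi(\vec{P}_\kappa \vec{A}_d \vec{P}'_\kappa)$, and permutation-invariance~\ref{cc} collapses every summand to $\phi(\vec{A}_d)$, giving $\phi(b^{-1}\vec{A}_{d^*}) \leq \phi(\vec{A}_d)$. Finally, writing $\vec{A}_{d^*} = b\,(b^{-1}\vec{A}_{d^*})$ with $b \geq 1$ and invoking scalar-monotonicity~\ref{bb} gives $\phi(\vec{A}_{d^*}) \leq \phi(b^{-1}\vec{A}_{d^*})$. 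Chaining the last two inequalities delivers $\phi(\vec{A}_{d^*}) \leq \phi(\vec{A}_d)$ for all $\phi \in \varPhi$ and all $d \in \mathcal{D}$, which is precisely universal optimality of $d^*$.

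I expect the main obstacle to be the careful bookkeeping at the interface of the three properties rather than any deep difficulty. Two points need attention: the scaling factor $b^{-1} \leq 1$ must be pushed through in the correct direction so that monotonicity \emph{decreases} (not increases) the value of $\phi$, and each matrix handed to $\phi$ must be verified to stay inside $\mathcal{B}_{gt}$. The trace step is where the two hypotheses interact most delicately: it is the trace-maximization of $d^*$ that upgrades the abstract averaging identity into the quantitative bound $b \geq 1$, and without $b \geq 1$ the final monotonicity step would run the wrong way and the argument would collapse.
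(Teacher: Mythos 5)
Your proof is correct and follows essentially the same route as the paper's own argument: both dispose of the case $\vec{A}_d = \zero_{gt \times gt}$ via condition~\ref{bb}, take traces of the averaging identity to deduce $b_d = \sum_{\kappa} b_{d\kappa} \geq 1$ from trace-maximization and $\text{\rmfamily\upshape tr}(\vec{A}_d) > 0$, and then write $\vec{A}_{d^*} = b_d \sum_{\kappa} (b_{d\kappa}/b_d) \vec{P}_{\kappa} \vec{A}_{d} \vec{P}^{'}_{\kappa}$ so that scalar monotonicity, matrix convexity, and permutation invariance together yield $\phi(\vec{A}_{d^*}) \leq \phi(\vec{A}_d)$. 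Your additional bookkeeping (verifying that each matrix handed to $\phi$ stays in $\mathcal{B}_{gt}$) merely makes explicit what the paper leaves implicit.
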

\begin{proof}
The proof is on similar lines as in \cite{yen1986conditions}. Please see Appendix \ref{secA} for an outline of the proof.
\end{proof}

\subsection{Universal Optimality of Orthogonal Array Designs}
In this subsection, we consider designs with $p=t$. Our interest is to check if the universal optimality of a design represented by an orthogonal array of Type $\rm{I}$ and strength $2$, $OA_{I} \left( n=\lambda t \left(t-1 \right), p=t, t, 2 \right)$, where $t \geq 3$ and $\lambda$ is a positive integer, holds for the $g>1$ case. Note that in the univariate case with correlated error terms, a design given as $OA_{I} \left( n=\lambda t \left(t-1 \right), p, t, 2 \right)$, where $3 \leq p \leq t$ and $\lambda$ is a positive integer, is shown to be universally optimal for the direct effects over the class of binary designs by \cite{Kunert2000OptimalityErrors}. 
\begin{thm}
Let $d^* \in \mathcal{D}$ be a design given by $OA_{I} \left( n=\lambda t \left(t-1 \right), p=t, t, 2 \right)$, where $\mathcal{D}$ is a class of binary designs with $p=t$, $\lambda$ is a positive integer and $t \geq 3$. Then $d^*$ is a universally optimal design for the direct effects over $\mathcal{D}$ in the $g > 1$ case.
\label{thm1-c3}
\end{thm}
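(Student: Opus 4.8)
The plan is to reduce the multivariate problem to the univariate one through the Kronecker structure of the information matrix and then to invoke Lemma~\ref{lemma1} in place of Kiefer's criterion. Write $\vec{C}^{0}_{d} = \vec{C}_{d11} - \vec{C}_{d12}\vec{C}^{-}_{d22}\vec{C}_{d21}$ for the single-response information matrix, so that Theorem~\ref{thm4b} gives $\vec{C}_{d} = \identity_g \otimes \vec{C}^{0}_{d}$ for \emph{every} design $d \in \mathcal{D}$, not merely for $d^*$. This uniform Kronecker factorization, valid because between-response correlations vanish and the $g$ responses share one per-response information matrix, is precisely what makes the lifting argument possible; by Remark~\ref{re2a} each $\vec{C}^{0}_{d}$ is then symmetric, n.n.d., with zero row sums.

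First I would import the two univariate facts supplied by \cite{Kunert2000OptimalityErrors} for the range $3 \le p \le t$, which covers the boundary case $p=t$ treated here. Namely: (i) for $d^*$ given by $OA_I(n=\lambda t(t-1), p=t, t, 2)$ the matrix $\vec{C}^{0}_{d^*}$ is completely symmetric, whence $\vec{C}^{0}_{d^*} = \frac{\mathrm{tr}(\vec{C}^{0}_{d^*})}{t-1}\left(\identity_t - \frac{1}{t}\matone_t\right)$; and (ii) $d^*$ maximizes $\mathrm{tr}(\vec{C}^{0}_{d})$ over the class of binary designs. Next I would symmetrize within one response block: for any binary $d$ with $\vec{C}_{d} \neq \zero$, averaging over the $t!$ permutation matrices $\vec{Q}$ of order $t$ yields a completely symmetric matrix of unchanged trace, so $\frac{1}{t!}\sum_{\vec{Q}} \vec{Q}\,\vec{C}^{0}_{d}\,\vec{Q}' = \frac{\mathrm{tr}(\vec{C}^{0}_{d})}{t-1}\left(\identity_t - \frac{1}{t}\matone_t\right)$. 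Combining this with (i) gives $\vec{C}^{0}_{d^*} = \frac{\mathrm{tr}(\vec{C}^{0}_{d^*})}{t!\,\mathrm{tr}(\vec{C}^{0}_{d})} \sum_{\vec{Q}} \vec{Q}\,\vec{C}^{0}_{d}\,\vec{Q}'$, a representation with nonnegative weight since $\vec{C}^{0}_{d}$ being n.n.d. and nonzero forces $\mathrm{tr}(\vec{C}^{0}_{d}) > 0$.

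Finally I would lift this identity to order $gt$. The matrices $\identity_g \otimes \vec{Q}$ are genuine $gt \times gt$ permutation matrices, hence form an admissible subcollection of the $\vec{P}_\kappa$ appearing in Lemma~\ref{lemma1}; assigning them the common weight above and all remaining permutations weight zero, and using $(\identity_g \otimes \vec{Q})(\identity_g \otimes \vec{C}^{0}_{d})(\identity_g \otimes \vec{Q})' = \identity_g \otimes (\vec{Q}\,\vec{C}^{0}_{d}\,\vec{Q}')$, reproduces $\vec{C}_{d^*} = \identity_g \otimes \vec{C}^{0}_{d^*}$ exactly. The trace condition of Lemma~\ref{lemma1} follows from $\mathrm{tr}(\vec{C}_{d}) = g\,\mathrm{tr}(\vec{C}^{0}_{d})$ together with (ii), so that $d^*$ maximizes $\mathrm{tr}(\vec{C}_{d})$ over $\mathcal{D}$. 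Lemma~\ref{lemma1} then yields the universal optimality of $d^*$ over $\mathcal{D}$ in the $g>1$ case.

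I expect the main obstacle to be conceptual rather than computational: since $\vec{C}_{d^*}$ fails to be completely symmetric as a $gt \times gt$ matrix (Lemma~\ref{ree2c}), averaging over the full symmetric group on $gt$ symbols would mix the diagonal blocks and would not return $\vec{C}_{d^*}$, which is exactly why Kiefer's sufficient condition is unavailable. The crux is the observation that Lemma~\ref{lemma1} only demands that $\vec{C}_{d^*}$ lie in the convex cone generated by \emph{some} permutations of $\vec{C}_{d}$, so restricting to the block-diagonal subgroup $\{\identity_g \otimes \vec{Q}\}$ preserves the Kronecker structure while still importing the univariate complete symmetry within each response. Confirming the nonnegativity of the weights and that $p=t$ falls within Kunert's admissible range $3 \le p \le t$ are the remaining points requiring care.
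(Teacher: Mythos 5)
Your proposal is correct and follows essentially the same route as the paper: exploit the Kronecker factorization $\vec{C}_d = \identity_g \otimes \vec{C}^{0}_d$ from Theorem~\ref{thm4b}, import trace-maximality and complete symmetry of $\vec{C}^{0}_{d^*}$ from \cite{Kunert2000OptimalityErrors} and \cite{Martin1998Variance-balancedObservations}, express $\vec{C}^{0}_{d^*}$ as a nonnegative combination of permuted copies of $\vec{C}^{0}_d$, lift via the block permutations $\identity_g \otimes \vec{Q}$, and invoke Lemma~\ref{lemma1}. The only difference is cosmetic: where the paper cites Remark 1 of \cite{yen1986conditions} for the representation \eqref{eq7}, you derive it directly by symmetrizing $\vec{C}^{0}_d$ over the $t!$ permutations, which is the standard proof of that remark.
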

\begin{proof}
\label{proof5}
Here $d^* \in \mathcal{D}$ is a $OA_{I} \left( n=\lambda t \left(t-1 \right), p=t, t, 2 \right)$, where $\mathcal{D}$ is a class of binary designs with $p=t$, $\lambda$ is a positive integer and $t \geq 3$. From Remark~{\upshape\ref{re2a}}, it can be clearly seen that for $d \in \mathcal{D}$, $\vec{C}_{d} \in \mathcal{B}_{gt}$. Also, $d^*$ is a universally optimal design over $\mathcal{D}$ under the $g=1$ case \citep{Kunert2000OptimalityErrors}. Thus we know that for $g=1$, $d^*$ maximizes $\text{\rmfamily\upshape tr} \left( \vec{C}_{{d}}  \right) = \text{\rmfamily\upshape tr} \left( \vec{C}_{{d}11} - \vec{C}_{{d}12} \vec{C}^{-}_{{d}22} \vec{C}_{{d}21} \right)$ over $\mathcal{D}$. From Theorem~{\upshape\ref{thm4b}}, for $g>1$ and $d \in \mathcal{D}$, the information matrix for the direct effects can be written as
\begin{align}
\vec{C}_{d}&= \identity_g \otimes \left[ \vec{C}_{{d}11} - \vec{C}_{{d}12} \vec{C}^{-}_{{d}22} \vec{C}_{{d}21} \right].
\label{eq5}
\end{align}
Hence for $g>1$ also, $d^*$ maximizes $\text{\rmfamily\upshape tr} \left( \vec{C}_{{d}}  \right)$ over $\mathcal{D}$.

From \cite{Martin1998Variance-balancedObservations} and \cite{Kunert2000OptimalityErrors}, we get that for $g=1$, the information matrix $\vec{C}_{d^*}$ is a completely symmetric matrix. So using Remark 1 from \cite{yen1986conditions}, we get for $g=1$, any $d \in \mathcal{D}$ and $\vec{C}_{{d}11} - \vec{C}_{{d}12} \vec{C}^{-}_{{d}22} \vec{C}_{{d}21} \neq \zero_{t \times t}$,
\begin{equation}
\begin{split}
\vec{C}_{d^*} &= \vec{C}_{{d^*}11} - \vec{C}_{{d^*}12} \vec{C}^{-}_{{d^*}22} \vec{C}_{{d^*}21} = \sum_{\upsilon=1}^{t!}  c_{d\upsilon}\vec{Q}_{\upsilon} \left( \vec{C}_{{d}11} - \vec{C}_{{d}12} \vec{C}^{-}_{{d}22} \vec{C}_{{d}21} \right) \vec{Q}^{'}_{\upsilon},
\end{split}
\label{eq7}
\end{equation}
where $c_{d\upsilon} = \left[\text{\rmfamily\upshape tr} \left( \vec{C}_{d^*} \right) \right]/\left[ t! \times \text{\rmfamily\upshape tr} \left( \vec{C}_{{d}11} - \vec{C}_{{d}12} \vec{C}^{-}_{{d}22} \vec{C}_{{d}21} \right) \right] \geq 0$ and $\vec{Q}_{\upsilon}$'s are distinct $t \times t$ permutation matrices.

Using \eqref{eq5}, it is clear that for $g>1$ and $d \in \mathcal{D}$, $\vec{C}_{d} = \zero_{gt \times gt}$ if and only if $\vec{C}_{{d}11} - \vec{C}_{{d}12} \vec{C}^{-}_{{d}22} \vec{C}_{{d}21} = \zero_{t \times t}$.
For $g>1$, any $d \in \mathcal{D}$, $\vec{C}_{d} \neq \zero_{gt \times gt}$ and $\kappa=1, \cdots, (gt)!$, if we suppose
\begin{align}
b_{d\kappa} &= 
\begin{cases}
c_{d\upsilon}, \text{ if }\vec{P}_{\kappa} = \identity_g \otimes \vec{Q}_{\upsilon}, \text{ for some } \upsilon,\\
0, \text{ otherwise},
\end{cases}
\label{eq8}
\end{align}
then using \eqref{eq5} and \eqref{eq7}, we get that for $g>1$, any $d \in \mathcal{D}$ and $\vec{C}_{d} \neq \zero_{gt \times gt}$,
\begin{align}
\vec{C}_{d^*} &= \sum_{\kappa=1}^{(gt)!} b_{d\kappa} \vec{P}_{\kappa} \vec{C}_{d} \vec{P}^{'}_{\kappa}.
\label{eq9}
\end{align}
Thus from \eqref{eq9} and Lemma~{\upshape\ref{lemma1}}, we have proved $d^*$ is universally optimal for the direct effects over $\mathcal{D}$ in the $g>1$ case.
\end{proof}

\section{Illustration}
\renewcommand{\sectionautorefname}{Section}
In this section, we evaluate the efficiency of the $3 \times 3$ binary crossover design used in the genetic study by \cite{Leaker2017TheComplement}. As our response variates, we take the $5$ gene profiles. Each of the $3$ treatment sequences, $ABC$, $CAB$ and $BCA$, are applied on $6$ subjects, where treatment $A$ denotes the $10$ mg dose of the drug, $B$ denotes the placebo and $C$ denotes the $25$ mg dose of the drug. Let us denote the above considered design by $d_0$. To study the efficiency of $d_0$, we consider two different structures of matrix $\vec{V}$ as follows:
\renewcommand{\sectionautorefname}{Appendix}
\begin{enumerate}
\item \label{a1}
$\vec{V} = \left( 1 - r_{(1)}^2 \right)^{-1}
\begin{bmatrix}
1 & r_{(1)} & r_{(1)}^2\\
r_{(1)} & 1 & r_{(1)}\\
r_{(1)}^2 & r_{(1)} & 1
\end{bmatrix}$, where $-1 < r_{(1)} <1$;
\item \label{b1}
$\vec{V} =
\begin{bmatrix}
1 & r_{(1)} & 0\\
r_{(1)} & 1 & r_{(1)}\\
0 & r_{(1)} & 1
\end{bmatrix}$, where $-1/\sqrt{2} < r_{(1)} <1/\sqrt{2}$.
\end{enumerate}
\ref{a1} is a $AR(1)$ structure, and \ref{b1} is a tridiagonal structure with all diagonal and off-diagonal elements equal. From our choices of $\vec{V}$, we note it is a function of $r_{(1)}$ only, where $r_{(1)}$ is such that the matrix $\vec{V}$ is positive definite. Note that if $r_{(1)} = 0$, then both $AR(1)$ and tridiagonal structure reduces to a diagonal structure.

From Theorem~{\upshape\ref{thm1-c3}}, we know that a design $d^* \in \mathcal{D}$ represented by $OA_{I} \left( n=\lambda t \left(t-1 \right), p=t,\right.$\\$\left.t, 2 \right)$, where $\mathcal{D}$ is a class of binary designs with $p=t$, $\lambda$ is a positive integer and $t \geq 3$, is universally optimal for the direct effects over $\mathcal{D}$ for the $g>1$ setup. Suppose to get a measure of the efficiency of any binary design $d \in \mathcal{D}$ with $t$ and $n$ same as that of $d^*$ we use the ratio of the traces as follows:
\begin{align}
e &= \frac{\text{\rmfamily\upshape tr} \left( \vec{C}_d \right)}{\text{\rmfamily\upshape tr} \left( \vec{C}_{d^*} \right)}.
\end{align}
Values of $e$ close to $1$ shows that design, $d$, is a nearly efficient design. Figures~\ref{AR(1)} and \ref{Tridiagonal} plot $e$ under both the structures of $\vec{V}$. From Figures~\ref{AR(1)} and \ref{Tridiagonal}, we see that the maximum efficiency of design $d_0$ is $0.0278$ as compared to the universal optimal design $d^*$ for both structures of $\vec{V}$. Thus, $d_0$ is not an efficient/nearly efficient design.

\renewcommand{\thefigure}{\arabic{figure}}
\begin{figure}[H]
\centering
 \includegraphics[width=0.9\textwidth]{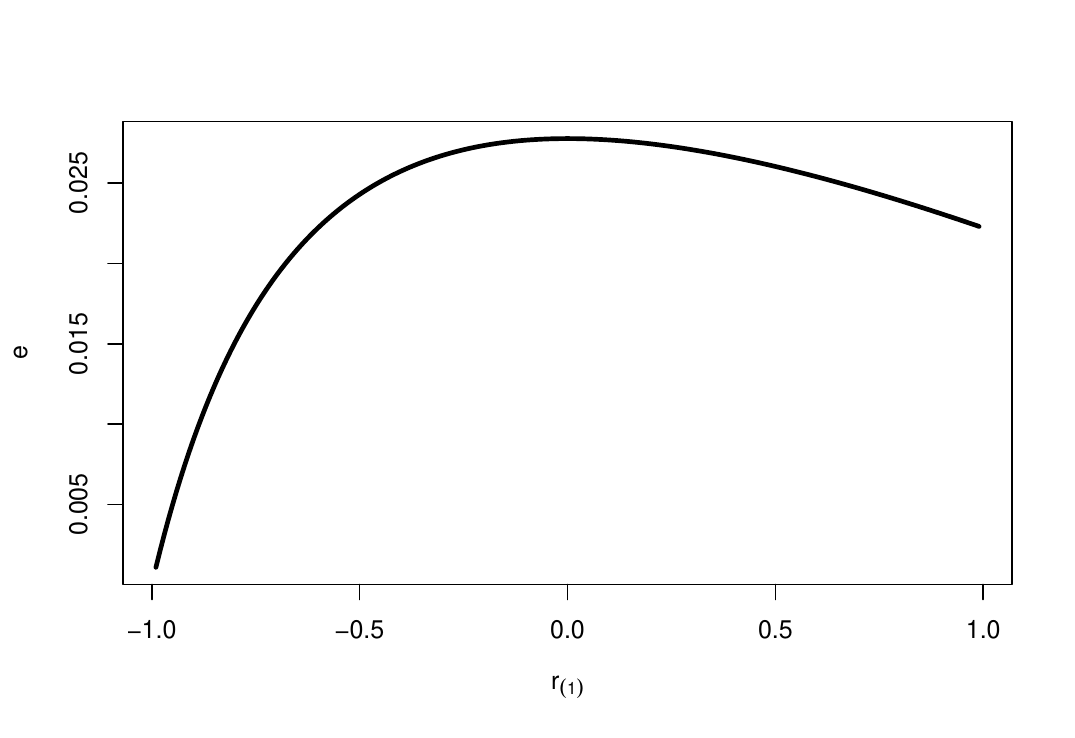}
 \caption{Efficiency of $d_0$ when $\vec{V}$ has a $AR(1)$ structure}
 \label{AR(1)}
\end{figure}
\begin{figure}[H]
  \centering
  \includegraphics[width=0.9\textwidth]{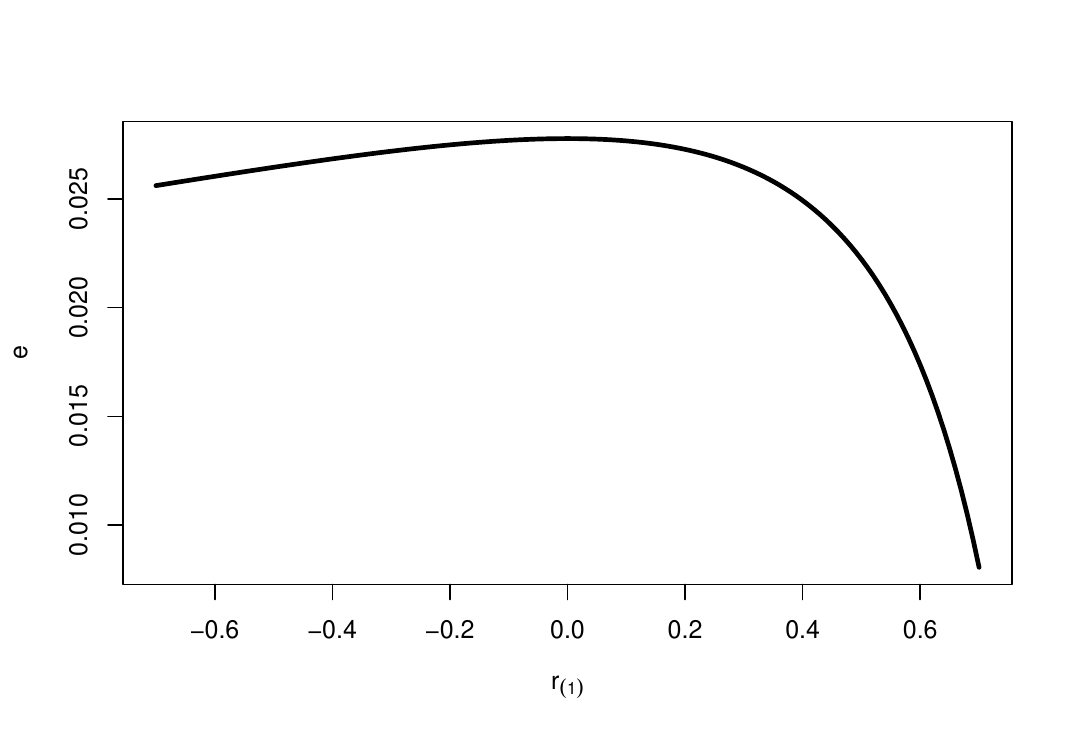}
  \caption{Efficiency of $d_0$ when $\vec{V}$ has a tridiagonal structure}
  \label{Tridiagonal}
\end{figure}

\section{Conclusion and Future Direction}
In this article, we investigated universal optimality for an orthogonal array of Type $\rm{I}$ and strength $2$ when $g$ responses are recorded in each period from each subject, where $g \geq 1$. Under the multivariate fixed effect model, the information matrix for the direct effects differed from the $g=1$ case, particularly in terms of the completely symmetric property. For the $g>1$ case and non-zero within response correlation, for $p=t \geq 3$, we showed a design given as an orthogonal array of type $\rm{I}$ and strength $2$ is universally optimal for the direct effects over a class of binary designs. By following similar techniques as employed in this article, we can also show that for uncorrelated and homoscedastic errors, if a balanced uniform design is universally optimal for the direct effects (carryover effects) over a subclass of designs for $g=1$, then the universal optimality also holds over the same subclass of designs for the $g >1$ case.

Though we did not consider that the between response correlation is measured in the same/different periods in this article, such correlation may exist in a multiple response crossover experiment. However, in those cases, the error covariances will be of complex nature and it may be tedious to determine theoretical optimal designs. As a future direction, we plan to investigate optimality results in such correlated crossover scenarios. In the future, we also plan to study the effect of heteroscedastic error terms on the results for universal optimality in the $g>1$ case.

\section*{Statements and Declarations}

\begin{itemize}
\item \textbf{Funding} The authors did not receive support from any organization for the submitted work.
\item \textbf{Competing interests} The authors declare no competing interests.
\item \textbf{Ethics approval and consent to participate} This research does not involve human or animal participants.
\item \textbf{Consent for publication}  Not applicable.
\item \textbf{Data availability} The data is available in the Gene Expression Omnibus database under accession number GSE67200 and is available at the following URL: \url{https://www.ncbi.nlm.nih.gov/geo/query/acc.cgi?acc=GSE67200}. 
\item \textbf{Materials availability} Not applicable.
\item \textbf{Code availability} \url{https://github.com/rsphd/Correlation-Code} contains the code (.R file) and the dataset on $5$ genes (.zip file consisting of 5 Excel files), for checking the significance of correlation between and within genes.
\item \textbf{Author contribution} All authors contributed to the study conception and design. Material preparation, analysis and investigation were performed by Shubham Niphadkar. The first draft of the manuscript was written by Shubham Niphadkar and all authors commented on previous versions of the manuscript. All authors read and approved the final manuscript.
\end{itemize}

\begin{appendices}
\numberwithin{equation}{section}
\section{Some Useful Proofs and Results}\label{secA}
\begin{proof}[Proof of Lemma~{\upshape\ref{lemma1}}]
\label{proof3}
Here $d \in \mathcal{D}$, where $\mathcal{D}$ is a subclass of designs such that $\vec{A}_d \in \mathcal{B}_{gt}$. For $g>1$, if $\vec{A}_d = \zero_{gt \times gt}$, then from condition~{\upshape\ref{bb}}, it is clear that $\phi \left( \vec{A}_{d^*} \right) \leq \phi \left( \vec{A}_{d} \right)$. Now we consider the case when for $g>1$, $\vec{A}_d \neq \zero_{gt \times gt}$. Note that for $d \in \mathcal{D}$, we have $\vec{A}_d \in \mathcal{B}_{gt}$. So $\vec{A}_d = \vec{L} \vec{L}^{'}$, where $\vec{L} = \vec{A}_d^{1/2}$. Hence we get that $\text{\rmfamily\upshape tr} \left( \vec{A}_d \right) >0$. Here we have $\vec{A}_{d^*} = \sum_{\kappa=1}^{(gt)!} b_{d\kappa} \vec{P}_{\kappa} \vec{A}_{d} \vec{P}^{'}_{\kappa}$. Since $\vec{A}_d \in \mathcal{B}_{gt}$, from condition~{\upshape\ref{cc}} we get that $\phi \left( \vec{A}_d \right) = \phi \left( \vec{P}_{\kappa} \vec{A}_{d} \vec{P}^{'}_{\kappa} \right)$, for $\kappa = 1, \cdots, (gt)!$. So, we get 
\begin{align}
\text{\rmfamily\upshape tr} \left( \vec{A}_{d^*} \right) = b_d \text{\rmfamily\upshape tr} \left(  \vec{A}_{d} \right),
\label{eqq2c}
\end{align}
where $b_d = b_{d1} + \cdots b_{d(gt)!}$. Also, here we know that $d^*$ maximizes $\text{\rmfamily\upshape tr} \left(  \vec{A}_{d} \right)$ over $\mathcal{D}$. Thus we get that $\text{\rmfamily\upshape tr} \left(  \vec{A}_{d^*} \right) \geq \text{\rmfamily\upshape tr} \left(  \vec{A}_{d} \right)$. Since $\text{\rmfamily\upshape tr} \left( \vec{A}_d \right) > 0$, from \eqref{eqq2c}, we get that $b_d \geq 1$. Thus using conditions~{\upshape\ref{aa}} and {\upshape\ref{bb}} along with $\vec{A}_{d^*} = \sum_{\kappa=1}^{(gt)!} b_{d\kappa} \vec{P}_{\kappa} \vec{A}_{d} \vec{P}^{'}_{\kappa}$, we get
\begin{align*}
\phi \left( \vec{A}_{d^*} \right) = \phi \left( b_d \sum_{\kappa=1}^{(gt)!} \left( b_{d\kappa}/b_d \right) \vec{P}_{\kappa} \vec{A}_{d} \vec{P}^{'}_{\kappa}  \right) \leq \phi \left( \vec{A}_{d} \right).
\end{align*}
Hence using the definition of universal optimality, we get that $d^*$ is a universally optimal design.
\end{proof}

\begin{lemma}
Let $d^* \in \varOmega_{t,n=\lambda t \left(t-1 \right),p=t}$ be a design given by $OA_{I} \left( n=\lambda t \left(t-1 \right), p=t, t, 2 \right)$, where $\lambda$ is a positive integer and $t \geq 3$. Then for $g > 1$, the information matrix $\cmat_{d^*}$ is not completely symmetric.
\label{ree2c}
\end{lemma}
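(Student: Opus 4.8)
The plan is to exploit the Kronecker-product form of the information matrix supplied by Theorem~\ref{thm4b}. By that theorem,
\[
\cmat_{d^*} = \identity_g \otimes \cmat_{0}, \qquad \cmat_{0} := \vec{C}_{{d^*}11} - \vec{C}_{{d^*}12}\vec{C}^{-}_{{d^*}22}\vec{C}_{{d^*}21},
\]
where $\cmat_{0}$ is the $t \times t$ information matrix of the single-response ($g=1$) problem for the same design. The structural observation driving the argument is that the operation $\identity_g \otimes (\cdot)$ produces a matrix that is block diagonal, with $g$ identical diagonal blocks and vanishing off-diagonal blocks; I will argue that this block pattern is incompatible with complete symmetry as soon as $g>1$.

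First I would recall, citing \cite{Martin1998Variance-balancedObservations} and \cite{Kunert2000OptimalityErrors}, that for the design $d^*$ given by $OA_{I}\left(n=\lambda t(t-1), p=t, t, 2\right)$ the single-response information matrix $\cmat_{0}$ is completely symmetric, so that it admits the representation $\cmat_{0} = a\,\identity_t + b\,\matone_t$ for suitable scalars $a,b$, where $\matone_t$ is the $t\times t$ all-ones matrix. Next I would pin down $b$: since $\cmat_{0}$ has zero row sums by Remark~\ref{re2a}, its common row sum $a+bt$ must vanish, giving $a=-bt$; and since $d^*$ is the nontrivial universally optimal design in the $g=1$ case, its direct-effect information is not null, so $\cmat_{0}\neq\zero_{t\times t}$. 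Were $b=0$ we would get $a=0$ and hence $\cmat_0=\zero$, a contradiction, so $b\neq 0$. Thus the common off-diagonal entry of $\cmat_0$ is a genuinely nonzero scalar.

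Finally I would simply read off the entries of $\cmat_{d^*}$. Writing $\cmat_{d^*} = a\,\identity_{gt} + b\,(\identity_g \otimes \matone_t)$, every diagonal entry equals $a+b$, but an off-diagonal entry lying inside one of the $g$ diagonal blocks equals $b\neq 0$, whereas an off-diagonal entry lying in an off-diagonal block equals $0$. Because $g>1$ there really do exist off-diagonal-block positions, for instance the entry in row $1$, column $t+1$. Hence $\cmat_{d^*}$ has off-diagonal entries taking two distinct values, $b$ and $0$. Since a completely symmetric matrix must have all off-diagonal entries equal, this shows $\cmat_{d^*}$ is not completely symmetric, which is the assertion of the lemma.

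I do not expect a serious obstacle here, as the argument is essentially a structural inspection of the Kronecker product. The only step requiring a genuine (if small) verification is establishing $b\neq 0$, that is, that $\cmat_{0}$ is not merely a scalar multiple of the identity; this I would settle through the zero-row-sum property together with the nonvanishing of the optimal design's direct-effect information. Once that is in hand, the failure of complete symmetry is immediate from the presence of the zero off-diagonal blocks.
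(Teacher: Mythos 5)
Your argument is correct and shares its skeleton with the paper's proof: both start from the Kronecker form $\cmat_{d^*}=\identity_g\otimes\cmat_0$ of Theorem~\ref{thm4b} and conclude that, once $g>1$, the zero off-diagonal blocks cannot coexist with the nonzero off-diagonal entries inside each diagonal block, so complete symmetry fails. The difference lies in how the two proofs certify that the within-block off-diagonal entries are nonzero. The paper quotes the explicit closed form $\cmat_0=\left(\det\left(\vec{E}\right)/e_{22}\right)\hatmat_t$ with $\det\left(\vec{E}\right)\neq 0$ from \cite{Martin1998Variance-balancedObservations} and \cite{Bose2009OptimalDesigns}, so all off-diagonal entries are visibly equal to the nonzero scalar $-\det\left(\vec{E}\right)/\left(t\,e_{22}\right)$. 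You instead use three qualitative facts only: complete symmetry of $\cmat_0$ (so $\cmat_0=a\,\identity_t+b\,\matone_t$), the zero row sums from Remark~\ref{re2a} (so $a=-tb$, i.e.\ $\cmat_0=-tb\,\hatmat_t$), and $\cmat_0\neq\zero_{t\times t}$ (so $b\neq 0$). This is more self-contained, since it never needs the entries $e_{11},e_{12},e_{22}$ or the determinant condition, only structural properties already recorded in the paper.

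The one step you should tighten is the claim that universal optimality of $d^*$ in the $g=1$ case by itself gives $\cmat_0\neq\zero_{t\times t}$. Optimality over a class in which every design had a null information matrix would hold vacuously, so non-nullity requires exhibiting positive information somewhere in the class: for instance, note that $d^*$ maximizes $\mathrm{tr}\left(\cmat_0\right)$ over the binary designs and that some binary design permits estimation of direct-effect contrasts, hence has positive trace; or simply invoke $\det\left(\vec{E}\right)\neq 0$ from the sources you already cite, which yields $\cmat_0\neq\zero_{t\times t}$ immediately. With that one line added, your proof is complete.
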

\begin{proof}
\label{proof4}
Here $d^* \in \varOmega_{t,n=\lambda t \left(t-1 \right),p=t}$ is a design given by $OA_{I} \left( n=\lambda t \left(t-1 \right), p=t, t, 2 \right)$, where $\lambda$ is a positive integer and $t \geq 3$. From Theorem~{\upshape\ref{thm4b}}, for $g > 1$, the information matrix $\vec{C}_{d^*}$ is given as
\begin{align}
\begin{split}
\vec{C}_{d^*} &= \identity_g \otimes \left[\vec{C}_{{d^*}11} - \vec{C}_{{d^*}12} \vec{C}^{-}_{{d^*}22} \vec{C}_{{d^*}21} \right],
\end{split}
\label{eqq1b}
\end{align}
where $\vec{C}^{-}_{d^*22}$ is a generalized inverse of of $\vec{C}_{d^*22}$.
From \cite{Martin1998Variance-balancedObservations} and \cite{Bose2009OptimalDesigns} (see Chapter 1, pp. 12--18), we get that for $g=1$,
\begin{align}
\begin{split}
\vec{C}_{d^*} &=\vec{C}_{{d^*}11} - \vec{C}_{{d^*}12} \vec{C}^{-}_{{d^*}22} \vec{C}_{{d^*}21} = \left(   \text{\rmfamily\upshape det} \left( \vec{E} \right) / e_{22} \right) \hatmat_t,
\end{split}
\label{eqq2b}
\end{align}
where $\vec{E} = \frac{n}{t-1} \begin{bmatrix}
e_{11} & e_{12}\\
e_{12} & e_{22}
\end{bmatrix}$, $\text{\rmfamily\upshape det} \left( \vec{E} \right) \neq 0$, $e_{11} = \text{\rmfamily\upshape tr} \left(  \vec{T}^{'}_{d^*1} \vec{V}^{*} \vec{T}_{d^*1}\right)$, $e_{12} = \text{\rmfamily\upshape tr} \left(  \vec{T}^{'}_{d^*1} \vec{V}^{*} \vec{\psi} \vec{T}_{d^*1}\right)$ and $e_{22} = \text{\rmfamily\upshape tr} \left(  \vec{T}^{'}_{d^*1} \vec{\psi}^{'} \vec{V}^{*} \vec{\psi} \vec{T}_{d^*1}\right) -  \frac{\left(\vec{V}^{*} \right)_{1,1}}{t}$. Here $\hatmat_t = \identity_t - \frac{1}{t} \vecone_{t} \vecone_{t}^{'}$, $\vec{\psi} =  
\begin{bmatrix}
\zero^{'}_{p-1 \times 1} & 0\\
\identity_{p-1} & \zero_{p-1 \times 1}
\end{bmatrix}$, $\vec{V}^* = \vec{V}^{-1} - \left( \vecone_p^{'} \vec{V}^{-1} \vecone_p \right)^{-1} \vec{V}^{-1} \vecone_{p} \vecone_{p}^{'} \vec{V}^{-1}$, and $\left(\vec{V}^{*} \right)_{1,1}$ is the element corresponding to first row and first column of the matrix $\vec{V}^{*}$.
 
So using \eqref{eqq1b} and \eqref{eqq2b}, it is clear that for $g>1$, the matrix $\cmat_{d^*}$ is completely symmetric if and only if all off-diagonal elements of $\left( \text{\rmfamily\upshape det} \left(\vec{E} \right) / e_{22} \right) \hatmat_t$ are $0$. We know that all off-diagonal elements of the matrix $\hatmat_{t} = \identity_t - \frac{1}{t} \vecone_{t} \vecone_{t}^{'}$ are nonzero. From \eqref{eqq2b}, for $t \geq 3$, we get that $\text{\rmfamily\upshape det} \left( \vec{E}  \right) \neq 0$. Hence for $g>1$ and $t\geq3$, all off-diagonal elements of $\cmat_{d^*}$ are nonzero. Thus for $g>1$ and $t \geq 3$, the information matrix $\cmat_{d^*}$ is not completely symmetric.
\end{proof}

\section{Proofs of Section~{\upshape\ref{information matrices}}}\label{secB}

\begin{proof}[Proof of Theorem~{\upshape\ref{thm4b}}]
\label{proof1}
Let $\vec{\xi}_k = \begin{bmatrix}
\mu_k &
\vec{\alpha}^{'}_k &
\vec{\beta}^{'}_k & 
\vec{\rho}^{'}_k
\end{bmatrix}^{'}
$. Then by rearranging the parameters, the model \eqref{unio} can be equivalently expressed as
\begin{multline}
\begin{bmatrix}
 \vec{Y}^{'}_{d1} &
\cdots &
\vec{Y}^{'}_{dg}
\end{bmatrix}^{'}
= \left(  \identity_g \otimes \vec{T}_d \right) 
\begin{bmatrix}
\vec{\tau}^{'}_1 & \cdots & \vec{\tau}^{'}_g
\end{bmatrix}^{'}
+
\left(
\identity_g \otimes 
\begin{bmatrix}
\vecone_{np} & \vec{X}_1 & \vec{F}_d
\end{bmatrix}
\right) 
\begin{bmatrix}
\vec{\xi}_1^{'} & \cdots & \vec{\xi}_g^{'}
\end{bmatrix}^{'}
\\+
\begin{bmatrix}
 \vec{\eps}^{'}_{1} &
\cdots &
\vec{\eps}^{'}_{g}
\end{bmatrix}^{'}.
\label{sss1}
\end{multline}
Premultiplying the above equation by $\identity_g \otimes \vec{\varSigma}^{-1/2}$, we get the model as
\begin{multline}
\begin{bmatrix}
 \vec{Y}^{'}_{d1(new)} &
\cdots &
\vec{Y}^{'}_{dg(new)}
\end{bmatrix}^{'}
= \left(  \identity_g \otimes \vec{\varSigma}^{-1/2} \vec{T}_d \right) 
\begin{bmatrix}
\vec{\tau}^{'}_1 & \cdots & \vec{\tau}^{'}_g
\end{bmatrix}^{'}
\\+
\left(
\identity_g \otimes \vec{\varSigma}^{-1/2}
\begin{bmatrix}
\vecone_{np} & \vec{X}_1 & \vec{F}_d
\end{bmatrix}
\right) 
\begin{bmatrix}
\vec{\xi}_1^{'} & \cdots & \vec{\xi}_g^{'}
\end{bmatrix}^{'}
+
\begin{bmatrix}
 \vec{\eps}^{'}_{1(new)} &
\cdots &
\vec{\eps}^{'}_{g(new)}
\end{bmatrix}^{'},
\label{sss2}
\end{multline}
where $\vec{Y}_{dk(new)} = \vec{\varSigma}^{-1/2} \vec{Y}_{dk}$ and $\vec{\eps}_{k(new)} = \vec{\varSigma}^{-1/2} \vec{\eps}_{k} $. Note that the dispersion matrix of the vector of transformed error terms $\begin{bmatrix}
 \vec{\eps}^{'}_{1(new)} &
\cdots &
\vec{\eps}^{'}_{g(new)}
\end{bmatrix}^{'}$ is $\sigma^2 \identity_{gnp}$. Thus using the expression of the information matrix from \cite{kunert1983optimal1}, we get that the information matrix for the direct effects can be expressed as
\begin{align}
\vec{C}_{d} &=
\left( \identity_g \otimes \vec{T}_d^{'} \vec{\varSigma}^{-1/2} \right)  \text{\rmfamily\upshape pr}^{\perp} \left( 
\identity_g \otimes \vec{\varSigma}^{-1/2}
\begin{bmatrix}
\vecone_{np} & \vec{X}_1 & \vec{F}_d
\end{bmatrix}  
\right) \left( \identity_g \otimes \vec{\varSigma}^{-1/2} \vec{T}_d\right),
\label{sss3}
\end{align}
where $\text{\rmfamily\upshape pr}^{\perp} \left( \vec{M} \right) = \identity - \vec{M} \left(\vec{M}^{'} \vec{M} \right)^{-} \vec{M}^{'}$ is the orthogonal projection matrix onto the space orthogonal to the column space of matrix $\vec{M}$ and the order of $\identity$ is same as the order of $\vec{M} \left(\vec{M}^{'} \vec{M} \right)^{-} \vec{M}^{'}$. 
By calculation, the above equation can be expressed as
\begin{align}
\vec{C}_{d} &= \identity_g \otimes \left[
 \vec{T}_d^{'} \vec{\varSigma}^{-1/2} \text{\rmfamily\upshape pr}^{\perp} \left( \vec{\varSigma}^{-1/2}
\begin{bmatrix}
\vecone_{np} & \vec{X}_1 & \vec{F}_d
\end{bmatrix}  
\right) \vec{\varSigma}^{-1/2} \vec{T}_d \right].
\label{sss4}
\end{align}
From \cite{Bose2009OptimalDesigns} (see Chapter 1, pp. 12--18), we know
\begin{align}
\vec{T}_d^{'} \vec{\varSigma}^{-1/2} \text{\rmfamily\upshape pr}^{\perp} \left( \vec{\varSigma}^{-1/2}
\begin{bmatrix}
\vecone_{np} & \vec{X}_1 & \vec{F}_d
\end{bmatrix}  
\right) \vec{\varSigma}^{-1/2} \vec{T}_d =  \vec{C}_{{d}11} - \vec{C}_{{d}12} \vec{C}^{-}_{{d}22} \vec{C}_{{d}21},
\label{sss5}
\end{align}
where $\vec{C}_{{d}11} = \vec{T}^{'}_d \vec{A}^* \vec{T}_d$, $\vec{C}_{{d}12} = \vec{C}_{{d}21}^{'} = \vec{T}^{'}_d \vec{A}^* \vec{F}_d$, $\vec{C}_{{d}22} = \vec{F}^{'}_d \vec{A}^* \vec{F}_d$, $\vec{A}^* = \vec{\varSigma}^{-1/2} \text{\rmfamily\upshape pr}^{\perp} \left( \vec{\varSigma}^{-1/2} \vec{X}_1 \right) \vec{\varSigma}^{-1/2} = \hatmat_n \otimes \vec{V}^*$ and $\vec{V}^* = \vec{V}^{-1} - \left( \vecone_p^{'} \vec{V}^{-1} \vecone_p \right)^{-1} \vec{V}^{-1} \vecone_{p} \vecone_{p}^{'} \vec{V}^{-1}$. Here $\hatmat_n = \identity_n - \frac{1}{n} \vecone_n \vecone_n^{'}$ and $\vec{C}^{-}_{d22}$ is a generalized inverse of of $\vec{C}_{d22}$. Thus from \eqref{sss4} and \eqref{sss5}, we can prove \eqref{p9}.
\end{proof}

\begin{proof}[Proof of Remark~{\upshape\ref{re2a}}]
\label{proof2}
From \cite{Bose2009OptimalDesigns} (see Chapter 1, pp. 12--18), we get that for $g=1$, under model \eqref{unio}, the information matrix for the direct effects, which is given as $ \vec{C}_{{d}11} - \vec{C}_{{d}12} \vec{C}^{-}_{{d}22} \vec{C}_{{d}21}$, is symmetric, n.n.d., have row sums and column sums as zero, and is invariant with respect to the choice of generalized inverses involved.\\
Thus using the expression of the information matrix given in Theorem~{\upshape\ref{thm4b}}, we get that for $g \geq 1$, $\vec{C}_{d}$ is symmetric, n.n.d., satisfies
\begin{align*}
\vec{C}_{d} \vecone_{gt} &= \left( \identity_g \otimes \left[ \vec{C}_{{d}11} - \vec{C}_{{d}12} \vec{C}^{-}_{{d}22} \vec{C}_{{d}21} \right]  \right)\vecone_{gt} = \zero_{gt \times 1},\\
\vecone^{'}_{gt}\vec{C}_{d}  &= \vecone^{'}_{gt} \left( \identity_g \otimes \left[ \vec{C}_{{d}11} - \vec{C}_{{d}12} \vec{C}^{-}_{{d}22} \vec{C}_{{d}21} \right]  \right) = \zero_{1 \times gt}
\end{align*}
and is invariant with respect to the choice of generalized inverses involved.
\end{proof}

\end{appendices}


\makeatletter
\providecommand{\doi}[1]{%
  \begingroup
    \let\bibinfo\@secondoftwo
    \urlstyle{rm}%
    \href{http://dx.doi.org/#1}{%
      doi:\discretionary{}{}{}%
      \nolinkurl{#1}%
    }%
  \endgroup
}
\makeatother

\renewcommand{\doi}[1]{\urlstyle{rm}\url{https://doi.org/#1}}

\end{document}